\definecolor{jgreen}{RGB}{0,102,51}
\definecolor{maroon}{RGB}{133, 5, 63}
\lstdefinelanguage{code}{
basicstyle=\small\ttfamily,
alsoletter=",
classoffset=1,
morekeywords={@var, regions, critical_points, membership},
keywordstyle={\color{maroon}},
classoffset=2,
morekeywords={using},
keywordstyle={\color{jgreen}},
xleftmargin=1.5cm,
xrightmargin=1em,
columns=fullflexible,
keepspaces=true,
stepnumber=1,
numbers=none,
captionpos=b,
showspaces=false,
frame=none
}
\newtheorem{theorem}{Theorem}
\newtheorem{proposition}[theorem]{Proposition}
\newtheorem{corollary}[theorem]{Corollary}
\theoremstyle{definition}
\newtheorem{example}[theorem]{Example}
\newcommand{\PP}{\mathbb{P}}
\newcommand{\RR}{\mathbb{R}}
\newcommand{\CC}{\mathbb{C}}
\title{\bf Computing Arrangements of Hypersurfaces}
\author{Paul Breiding, Bernd Sturmfels and Kexin Wang} \date{}
\begin{document} \maketitle

\begin{abstract}  \noindent
  We present a Julia package {\tt HypersurfaceRegions.jl}
  for computing all connected components in the complement of an arrangement of 
  real   algebraic   hypersurfaces in~$\RR^n$.
  \end{abstract}

\section{Introduction}

Arrangements of real hyperplanes  are ubiquitous in
combinatorics, algebra and geometry. 
Their {\em regions} (i.e.~connected components) are convex polyhedra, either bounded or unbounded, and their numbers are invariants of the underlying matroid~\cite{Sta}.
The number of bounded regions agrees with the
Euler characteristic of the complex arrangement complement \cite{CHKS, CDFV},
and also with the maximum likelihood degree (ML degree). The latter is the number of complex
critical points of the associated models in statistics and physics \cite{ST}.

Arrangements of nonlinear hypersurfaces are equally important, 
but they are studied much less. Polynomials of higher degree
create features that are not seen when dealing with hyperplanes.
 The number of regions is no longer a 
combinatorial invariant, but it depends in a subtle way on the coefficients.
Moreover, the regions are generally not contractible.

We present a practical software tool, called {\tt HypersurfaceRegions.jl} and implemented in the programming language {\tt Julia} \cite{BEKV}, 
whose input consists of
 $k$ polynomials in $n$ variables:
\begin{equation}
\label{eq:input} f_1,f_2,\ldots, f_k \,\in\, \RR[x_1,\ldots,x_n]. 
\end{equation}
The output is a list of all regions $C$ of the $n$-dimensional manifold
\begin{equation}
\label{eq:manifold}\mathcal{U} \,\, = \,\, \bigl\{ \, u \in \RR^n \,: \, f_1(u) \cdot f_2(u) \,\cdots\, f_k(u) \,\not= \,0 \,\bigr\}. 
\end{equation}
The list is grouped according to sign vectors $\sigma \in \{-1,+1\}^k$,
where $\sigma_i $ is the sign of $f_i$ on~$C$.
Unlike in the case of hyperplane arrangements,
each sign vector $\sigma$ typically
corresponds~to multiple regions.
For each region $C$ we find the Euler characteristic via a Morse function.

\begin{figure}
\begin{center}
\includegraphics[height = 5cm]{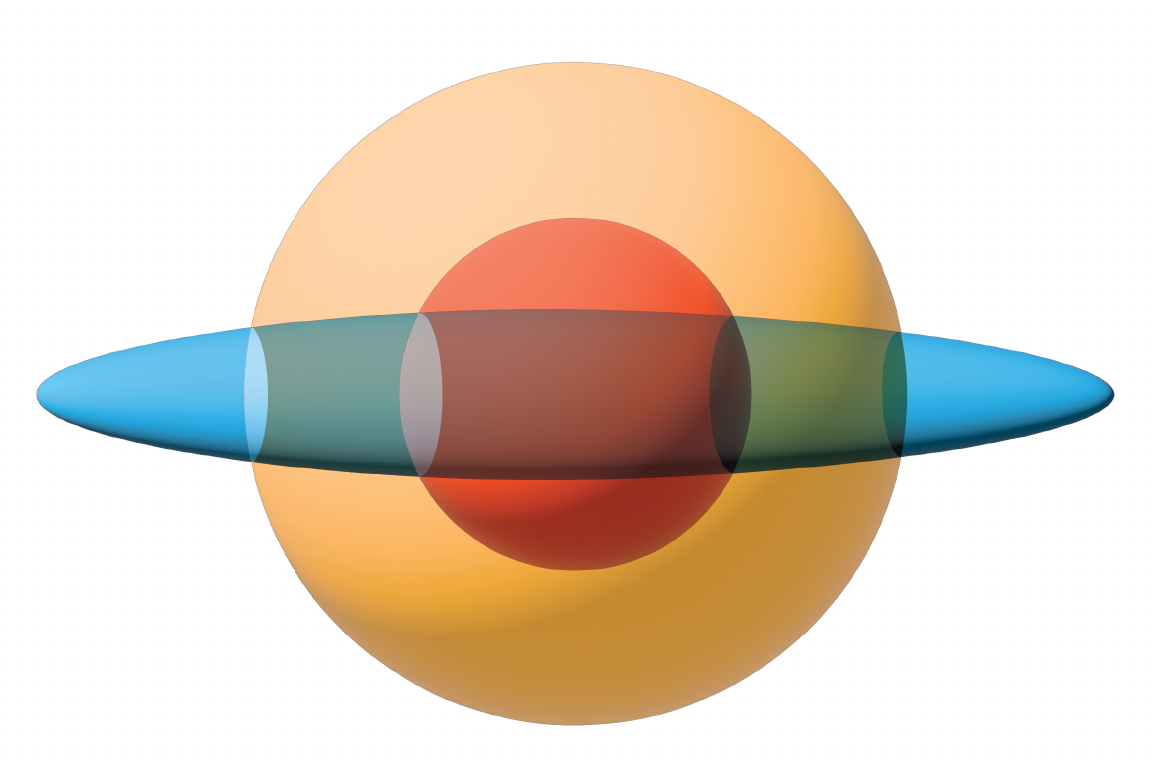}
\vspace{-0.1in}
\caption{\label{fig:concentric}
A skinny ellipse pierces two concentric spheres.
This arrangement has $8$ regions.}
\end{center}
\end{figure}

\begin{example}[$n\!=\!k\!=\!3$] \label{ex:skinny}
Figure \ref{fig:concentric} shows
two concentric spheres that are pierced by an ellipse:
$$ f_1 \,=\, x_1^2 + x_2^2 + x_3^2 - 1,\,\,
     f_2 \,=\, x_1^2 + x_2^2 + x_3^2 - 4,\,\,
     f_3 \, = \, 100 x_1^2 + 100 x_2^2 +  x_3^2 - 9. 
$$     
The threefold $\mathcal{U}$ has eight regions.
Five are contractible, with Euler characteristic $\chi = 1$.
The central region has  $\sigma = (-,-,-)$.
The sign vectors $(+,-,-)$ and $ (+,+,-)$ each contribute
two regions.
Two bounded regions are solid tori, with $\chi = 0$ and
$\sigma = (-,-,+),(+,-,+)$.
The unique unbounded region, with $\sigma = (+,+,+)$ and $\chi = 2$,
 is homotopic to the $2$-sphere.
\end{example}

Our software {\tt HypersurfaceRegions.jl} also features heuristics for deciding whether a region is bounded or unbounded, and which of the regions are fused when 
the hyperplane at infinity is added.
  In our Section \ref{sec4}, titled \emph{How to Use the Software}, we explain how this~works.
 
 \begin{example}[$n=2,k=21$]  \label{ex:cubicsurface}
 In del Pezzo geometry \cite[Section 3]{EGPSY}, it is known that 
 removing the  $27$ lines from a real cubic surface creates $130$ regions.
 We can confirm this by applying {\tt HypersurfaceRegions.jl} to the plane
 curves in the blow-up construction of the surface.
   Fix six general points in convex position in $\RR^2$.
 Let $f_1,\ldots,f_{15}$ be the lines
 spanned by pairs of  points and
 $f_{16},\ldots,f_{21}$ the ellipses spanned by five of the
 points. 
 This arrangement has
   $145$ regions, all contractible,
   of which $115$ are bounded and $30$ are unbounded. 
 Our software identifies these and fuses $15$ pairs of unbounded regions.
 It outputs the $130$ regions in $\RR \PP^2$.
 \end{example}

The method we present is an adaptation of the algorithm 
by Cummings, Hauenstein, Hong and Smyth \cite{CHHS}, which in turn is inspired by \cite{HRSS2020}.
Their setting is more general in that they allow semialgebraic sets
defined by both equations and inequalities.
We restrict ourselves to arrangement complements, defined by 
inequations $f_1 \not= 0,\ldots,f_k \not= 0$.
This enables us to offer tools in {\tt Julia} \cite{BEKV} that are easy to use
and widely applicable. Our implementation is based on
  the numerical algebraic geometry software {\tt HomotopyContinuation.jl} \cite{julia} and on the software {\tt DifferentialEquations.jl}~\cite{RN2017} for solving differential equations.

This paper is organized as follows.
In Section \ref{sec2} we introduce a Morse function for our problem. This is a 
modified version of the log-likelihood function associated with $f_1,\ldots,f_k$.
Every region of $\mathcal{U}$ 
contains at least one critical point.  We compute the critical points 
using  {\tt HomotopyContinuation.jl} .
These points
determine the Euler characteristic of each region.

 In Section \ref{sec3} we turn to the Mountain 
Pass Theorem \cite{Bis, MM}, which
 ensures that we obtain a connected graph in each region when tracking paths from
index $1$ critical points to index $0$ critical points. This tracking procedure is realized 
in our software by solving an~ODE using {\tt DifferentialEquations.jl}.
In short,  our software is built on flows in Morse theory.
%, thus setting the stage for a future implementation of Morse homology.

Section \ref{sec4} explains how to run {\tt HypersurfaceRegions.jl}.
It is aimed at beginners with no
prior experience with numerical software. We demonstrate that {\tt HypersurfaceRegions.jl} can be used for
a wide range of scenarios from the mathematical spectrum.
In Section \ref{sec5} we report on test runs of
our software on generic instances, obtained by sampling
random hypersurfaces and random spectrahedra.
Our presentation emphasizes ease and simplicity.

\section{Log-likelihood as a Morse function}
\label{sec2}

The algorithm in \cite{CHHS} rests on Morse theory. We review some basics from the textbook \cite{Nic}.
Let $\mathcal{M}$ be an $n$-dimensional manifold.
A smooth function $g: \mathcal{M} \rightarrow \RR$ is a {\em Morse function} if the
 Hessian matrix $\bigl( \partial^2 g / \partial x_i  \partial x_j\bigr) $ is invertible at every critical point of $g$.
The number of positive eigenvalues of the Hessian matrix is 
the {\em index} of the critical point. Let $g$ be a Morse function which is
{\em exhaustive}, i.e.~the superlevel set $\{g \geq c\} = \{u \in \mathcal{M} : g(u) \geq c\}$
is compact for each $c \in \RR$.
If an interval $[a,b] \subset \RR$ contains no critical value of $g$
then the superlevel sets $\{g \geq c\}$ are diffeomorphic for  $c \in [a,b]$.
By contrast, suppose 
 $c$ is a critical value of $g$, corresponding to a unique critical point 
of index $\ell$. Then, for sufficiently small $\epsilon$,
the superlevel set $\{g \geq c- \epsilon\}$ is diffeomorphic
to $\{g \geq c+\epsilon \}$ with one $\ell$-handle of dimension $n$ attached.
Since $\ell$-handles can be shrunk to $\ell$-cells, one arrives at
the following result.

\begin{theorem}\label{thm:euler}
The manifold $\mathcal{M}$ is homotopy equivalent to a CW-complex with exactly one $\ell$-cell for every critical point of index $\ell$. In particular, the Euler characteristic of $\mathcal{M}$ equals
\begin{equation} 
\label{eq:chi} \chi(\mathcal{M}) \,=\,
\sum_{\ell = 0}^n (-1)^\ell \mu_\ell.
\end{equation}
where $\mu_\ell$ is the number of index $\ell$ critical points of the exhaustive Morse function
$g : \mathcal{M} \rightarrow \RR$.
\end{theorem}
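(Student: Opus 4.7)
The plan is to run the standard Morse-theoretic filtration argument, scanning the values of $g$ from top to bottom. Because $g$ is exhaustive, every superlevel set is compact; by the Morse nondegeneracy condition critical points are isolated, so each superlevel set contains only finitely many critical points. Moreover, for $c$ sufficiently large the superlevel set $\{g \geq c\}$ is empty. This gives me a starting point, and lets the inductive construction proceed one critical value at a time.

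Next I would use the two structural facts already stated in the paragraph preceding the theorem. If the interval $[c-\delta, c+\delta]$ contains no critical value, the superlevel sets $\{g \geq c'\}$ for $c' \in [c-\delta, c+\delta]$ are diffeomorphic, so nothing changes up to homotopy. When $c$ crosses a critical value corresponding to a unique critical point of index $\ell$, $\{g \geq c-\epsilon\}$ is obtained from $\{g \geq c + \epsilon\}$ by attaching an $\ell$-handle of dimension $n$; since such a handle deformation retracts onto its core $\ell$-cell, up to homotopy this is the attachment of a single $\ell$-cell. Concatenating these steps, for every $c \in \RR$ I would build a finite CW-complex $X_c \simeq \{g \geq c\}$ with one $\ell$-cell for each index-$\ell$ critical point of $g$ lying in $\{g \geq c\}$. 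Passing to the direct limit as $c \to -\infty$ exhibits $\mathcal{M}$ as an increasing union of such complexes, yielding the desired CW structure on $\mathcal{M}$. The Euler characteristic formula $\chi(\mathcal{M}) = \sum_{\ell=0}^n (-1)^\ell \mu_\ell$ then falls out of the standard identity expressing the Euler characteristic of a CW-complex as the alternating sum of its cell counts.

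The main obstacle I expect is handling the case when several critical points share the same critical value, and, relatedly, the non-compactness of $\mathcal{M}$ itself. For the first issue I would either appeal to the fact that a generic smooth perturbation separates the critical values, or simply attach one $\ell_i$-handle simultaneously for each critical point on the critical level and verify that the local model of the gradient flow still yields a handle decomposition; the compactness of $\{g \geq c\}$ ensures only finitely many such points exist on each level. For the second issue I rely on exhaustiveness: although $\mathcal{M}$ may be non-compact, each $X_c$ is finite and the countable ascending union is a legitimate CW-complex, with $\chi(\mathcal{M})$ being a meaningful finite alternating sum provided the total number of critical points is finite, which in the applications of the subsequent sections will be guaranteed by the algebraic nature of the Morse function constructed from $f_1, \ldots, f_k$.
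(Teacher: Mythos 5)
Your proposal follows the same route the paper sketches in the paragraph preceding the theorem: filter $\mathcal{M}$ by superlevel sets $\{g \geq c\}$ (compact by exhaustiveness), use that nothing changes between critical values, attach an $\ell$-handle at each index-$\ell$ critical value, shrink handles to cells, and take the limit as $c \to -\infty$ to obtain the CW structure, after which the Euler characteristic formula is immediate. The additional care you take with coincident critical values and with non-compactness of $\mathcal{M}$ is sound and fills in details the paper delegates to the textbook reference.
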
 

We shall apply Theorem \ref{thm:euler} to
the manifold $\mathcal{U}$ defined in (\ref{eq:manifold}).
This requires an exhaustive Morse function.
On each region of $\,\mathcal{U}$, this role will be played by the rational function
\begin{equation}
\label{eq:likelihood}
g(x) \,\, = \,\, \frac{ \prod_{i=1}^k |f_i(x)|^{s_i}} { q(x)^t } .
\end{equation}
For the denominator $q(x)$ we take a generic polynomial of degree $2$ that is strictly positive on $\RR^n$.
The exponents $s_1,\ldots,s_k,t$ are arbitrary positive integers which satisfy the inequality
\begin{equation}
\label{eq:stineq} \sum_{i=1}^k s_i \,{\rm degree}(f_i) \,\, < \,\, 2\, t. 
\end{equation}

\begin{proposition}
The function $g(x)$ is an exhaustive Morse function.
It is strictly positive on $\,\mathcal{U}$ and it is zero
on each of the hypersurfaces $\{f_i = 0\}$. It also vanishes at infinity in $\RR^n$.
\end{proposition}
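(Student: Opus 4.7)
The plan is to dispatch the three elementary properties by direct growth estimates and then to justify the Morse property by a genericity argument in the quadratic $q$.

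\emph{Positivity, vanishing, and exhaustiveness.} On $\mathcal{U}$ each $f_i(u)$ is nonzero and $q(u) > 0$ by construction, so $g(u) > 0$; and if $f_i(u) = 0$ for some $i$ then the positive integer exponent $s_i$ forces $|f_i(u)|^{s_i} = 0$ and hence $g(u) = 0$. For the behavior at infinity, I would bound $\prod_i |f_i(u)|^{s_i} \leq C_1(1+\|u\|)^{D}$ with $D = \sum_i s_i \deg(f_i)$, and $q(u)^t \geq C_2(1+\|u\|)^{2t}$ using positivity of $q$; the inequality (\ref{eq:stineq}), $D < 2t$, then gives $g(u) \to 0$ as $\|u\| \to \infty$. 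These vanishing statements allow $g$ to be extended by zero to a continuous function on the one-point compactification of $\RR^n$. For any $c > 0$ the superlevel set $\{g \geq c\}$ is the preimage of a closed set under this continuous extension, hence compact, and it lies inside $\mathcal{U}$; this is the exhaustiveness needed to apply Theorem \ref{thm:euler}.

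\emph{Morse property.} On any fixed region the signs of the $f_i$ are constant, so $g$ agrees there with a smooth rational function. Since $g > 0$ on the region, its critical points and their Hessian non-degeneracy pattern coincide with those of
\[ h \,=\, \log g \,=\, \sum_{i=1}^k s_i \log|f_i| \,-\, t \log q. \]
To produce a Morse function for generic $q$, I consider the parametric gradient $F(x,q) = \nabla_x h(x,q)$ on $\mathcal{U} \times \mathcal{Q}$, where $\mathcal{Q}$ is the open set of strictly positive quadratics. Differentiating in the constant and linear coefficients of $q$ affects only the $-t \nabla_x \log q$ term; a brief computation shows that the resulting $n+1$ directional derivatives combine to yield each coordinate vector $e_j/q(x)$. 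Hence $F$ is a submersion at every $(x,q) \in \mathcal{U}\times \mathcal{Q}$, so $0$ is a regular value of $F$, and parametric transversality (Sard's theorem) delivers a full-measure set of $q$ for which $0$ is a regular value of $F(\cdot,q)$. This is precisely the Morse condition.

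The main obstacle is this last step. Submersivity of $F$ at a single point is straightforward linear algebra, but the transversality conclusion must be run separately on each of the finitely many sign sectors of $\mathcal{U}$ and then intersected; one should also check that the exponents $s_i, t$ together with arithmetic features of the $f_i$ cannot conspire to preserve a degenerate critical point for every $q$. Once this genericity is in hand, combining it with the first two paragraphs establishes all four claims of the proposition.
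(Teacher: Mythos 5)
The paper states this proposition without proof, so there is no argument of its own to compare against; I will therefore assess your proof on its merits. The overall strategy — direct growth estimates for positivity, vanishing, and exhaustiveness, followed by parametric transversality (Sard) in the quadratic $q$ for the Morse property — is sound, and the submersivity computation (combining $\partial_{q_j}F - x_j\,\partial_{q_0}F = -t\,e_j/q$) is correct.

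There is one real gap and one place where you are over-worrying. The gap: the lower bound $q(u)^t \geq C_2(1+\|u\|)^{2t}$ does not follow from ``positivity of $q$'' alone. The quadratic $q = 1 + x_1^2$ is strictly positive on $\RR^n$ yet stays equal to $1$ along the hyperplane $\{x_1=0\}$, so $g$ would typically fail to vanish along those rays. What you actually need is that the degree-two homogeneous part of $q$ be positive definite; this is exactly what the paper's genericity hypothesis on $q$ provides, and it should be invoked explicitly both here and in the one-point-compactification step of the exhaustiveness argument. The over-worrying: your final paragraph is mistaken that the transversality must be run sector-by-sector. Since $\log|f_i|$ is smooth on $\{f_i\neq 0\}$ regardless of sign, the map $F(x,q)=\nabla_x \log g$ is smooth on all of $\mathcal{U}\times\mathcal{Q}$, and your submersion computation already holds at every point of that product; the parametric Sard theorem then delivers, in one stroke, a full-measure set of $q\in\mathcal{Q}$ for which $0$ is a regular value of $F(\cdot,q)$. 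No further arithmetic compatibility condition on $s_i$, $t$, or the $f_i$ needs to be checked — the possibility of a degenerate critical point persisting for every $q$ is precisely what the submersion property and Sard together rule out.
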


The Morse function $g(x)$ in (\ref{eq:likelihood}) is called
 {\em master function} in the theory of arrangements \cite{CDFV} and
 {\em likelihood function} in algebraic statistics \cite{ST}.
Its logarithm is the {\em log-likelihood function}
\begin{equation}
\label{eq:loglikelihood}
{\rm log}(g(x)) \,\, = \,\, \sum_{i=1}^k s_i \cdot {\rm log} | f_i (x)| \, - \, t \cdot {\rm log}(q(x)). 
\end{equation}
The critical points of $g(x)$ are the solutions in $\CC^n$ of the
equation $\nabla\, {\rm log}(g(x)) = 0$.
The generic choice of the quadric $q(x)$ implies that all critical points
have distinct critical values.  The number of critical values is
the Euler characteristic of the very affine complex variety
\begin{equation}
\label{eq:veryaffine}
 \bigl\{\, u \in \CC^n \,:\, f_1(u) \cdot f_2(u) \,\cdots \,f_k(u) \cdot q(u) \,\not= \,0\, \bigr\}.
 \end{equation}
The construction of $g(x)$ ensures that each region of $\mathcal{U}$
contains at least one critical point. 
Our algorithm computes all real critical points, and it connects
them via the Mountain Pass Theorem. This will be explained in Section \ref{sec3}.
We now first
return to our three ellipsoids.

\begin{example}
Let $f_1,f_2,f_3$ be as in Example \ref{ex:skinny}, 
fix $s_1=s_2=s_3=1$ and $ t=4$, and define
$$ q \,\,=\,\, (x_1+2)^2 + (x_2-3)^2 + (x_3-3)^2 + (2x_1+x_2)^2 + 4. $$
This quadric is positive on $\RR^3$. Our Morse function for the complement of the three ellipses~is
$$ g \,\,=\,\, \frac{|f_1 f_2 f_3|}{q^4}. $$
The log-likelihood function ${\rm log}(g) = {\rm log}(f_1) + {\rm log}(f_2) + {\rm log}(f_3) - 4 \cdot{\rm log}(q)$
has $29$ complex critical points, so the threefold (\ref{eq:veryaffine})
has ML degree $29$. Among the critical points, $21$  are real. These serve as
 landmark points for the eight regions.
For instance, let $\mathcal{M}$  be the unbounded region.
It contains ten critical points, and their indices are
$0,0,1,1,1,1,2,2,2,2$.
In the notation of Theorem \ref{thm:euler}, we have
$\mu_0 = 2, \mu_1 = 4, \mu_2 = 4$, and hence
$\chi(\mathcal{M}) = \mu_0 - \mu_1 + \mu_2 = 2$.
\end{example}

The signed Euler characteristic of $\mathcal{U}$
is known as the maximum likelihood degree ({\em ML degree}) in
algebraic statistics \cite{CHKS, ST}. This is the number of
critical points we must compute. Explicitly, the logarithmic derivative of
$g(x)$ translates into the rational function equations
\begin{align*} 
\frac{s_1}{f_1}\frac{\partial f_1}{\partial x_1}\,+&\cdots+\frac{s_k}{f_k}\frac{\partial f_k}{\partial x_1}\,
+\,\frac{t}{q}\frac{\partial q}{\partial x_1}\,\,=\,\,0, \\
&\!\!\! \!\!\!\!\!\! \!\!\! \vdots \qquad \qquad \qquad \vdots \qquad \qquad \vdots  \\
\frac{s_1}{f_1}\frac{\partial f_1}{\partial x_n}\,+&\cdots+\frac{s_k}{f_k}\frac{\partial f_k}{\partial x_n}\,
+\,\frac{t}{q}\frac{\partial q}{\partial x_n}\,\,=\,\,0.
\end{align*}
The following a priori bound on the number of solutions was established in \cite[Theorem 1]{CHKS}.

\begin{proposition} \label{prop:catanese}
Let $d_1,\ldots,d_k$ be the degrees of the polynomials $f_1,\ldots,f_k$.
Then the ML degree of $\,\mathcal{U}$ is bounded above by the coefficient of $z^n$ in the 
rational generating function
\begin{equation}
\label{eq:cataneseseries}
\frac{(1-z)^n}{(1-z d_1)\,\cdots\, (1-z d_k)(1-2z)}.
\end{equation}
This bound is attained when the polynomials $f_i$ are generic relative to their degrees $d_i$.
\end{proposition}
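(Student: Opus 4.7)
The plan is to prove the proposition in two pieces: the exact value of the ML degree for generic $f_1,\dots,f_k$, and the upper-bound statement for arbitrary $f_1,\dots,f_k$. The upper bound is the easier half. The rational critical-point equations displayed immediately above the proposition define a zero-dimensional subscheme in a flat family over the space of coefficients; the number of isolated complex solutions is upper-semicontinuous, so it can only drop under specialization from the generic fiber (this is the same statement that underlies parameter homotopies in \texttt{HomotopyContinuation.jl}). The main content is therefore to establish equality when $f_1,\dots,f_k$ are generic.

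For the generic case, I would first identify the ML degree with the signed Euler characteristic $(-1)^n \chi(V)$ of the complex very affine variety $V$ from (\ref{eq:veryaffine}), as in \cite{ST}. The task then reduces to computing $\chi(V)$. I would do this by log Gauss--Bonnet on the smooth compactification $\PP^n \supset \RR^n$: let $H_\infty$ be the hyperplane at infinity, and let $D_i$ and $Q$ be the projective closures of $\{f_i = 0\}$ and $\{q = 0\}$. Genericity of the $f_i$ together with genericity of $q$ guarantees, by Bertini, that $D := H_\infty + Q + D_1 + \cdots + D_k$ is a simple normal crossings divisor on $\PP^n$, so
\[
\chi(V) \;=\; \int_{\PP^n} c_n\bigl(T_{\PP^n}(-\log D)\bigr).
\]

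Combining the Euler sequence for $T_{\PP^n}$ with the Poincar\'e residue sequence for the log cotangent bundle, the total Chern class is
\[
c\bigl(T_{\PP^n}(-\log D)\bigr) \;=\; \frac{(1+h)^{n+1}}{(1+h)(1+2h)\prod_{i=1}^{k}(1+d_i h)} \;=\; \frac{(1+h)^{n}}{(1+2h)\prod_{i=1}^{k}(1+d_i h)},
\]
with $h$ the hyperplane class of $\PP^n$; the factor $(1+h)$ for $H_\infty$ absorbs one copy of $(1+h)$ from the Euler sequence. Reading off the coefficient of $h^n$, multiplying by $(-1)^n$ to turn $\chi(V)$ into the ML degree, and substituting $z = -h$ produces exactly the coefficient of $z^n$ in (\ref{eq:cataneseseries}).

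The main obstacle is the SNC/transversality step, especially at infinity: for generic $f_i$ of the prescribed degrees one needs the projective closures $D_i$ to be smooth and to meet each other, $Q$, and $H_\infty$ transversally. This is a standard Bertini-type argument, but some care is required because the quadric $q$ is constrained by (\ref{eq:stineq}) and by positivity on $\RR^n$; one must check that the open semialgebraic set of admissible $q$ is still large enough inside the complex space of quadrics to allow a generic choice producing an SNC boundary. Everything after that is a short residue computation in $H^{\ast}(\PP^n)$.
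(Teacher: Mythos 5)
The paper does not prove this proposition: it is quoted as an external result, citing \cite[Theorem 1]{CHKS}. So there is no internal proof to compare you against. That said, your sketch is a correct reconstruction of the argument in the spirit of that reference: the identification of the ML degree with $(-1)^n\chi$ of the very affine complement (which is Huh's theorem, discussed in \cite{ST}), and the computation of that Euler characteristic as $\int_{\PP^n} c_n\bigl(T_{\PP^n}(-\log D)\bigr)$ on an SNC compactification, are exactly the ingredients used there. Your Chern-class bookkeeping is right, including the cancellation of the $(1+h)$ coming from $H_\infty$ against one copy of $(1+h)$ in $c(T_{\PP^n})=(1+h)^{n+1}$, and the substitution $h\mapsto -z$ together with the overall $(-1)^n$ lands cleanly on the coefficient of $z^n$ in \eqref{eq:cataneseseries}.

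Two small points deserve tightening. First, the upper-bound half is not literally ``flatness plus upper-semicontinuity of the number of isolated solutions'': the family of critical schemes over the full coefficient space is not flat (fiber dimensions can jump), and the number of isolated solutions alone is not upper-semicontinuous as a function on the base. The correct framing is the one the paper already invokes elsewhere, namely the parameter continuation theorem \cite{Sommese:Wampler:2005, BOROVIK2025102373}: for any member of a parametrized family with finitely many isolated solutions, that number is bounded by the generic root count. Second, the worry about the quadric $q$ is legitimate but benign. The condition \eqref{eq:stineq} constrains the exponents $s_i,t$, not $q$ itself, and positive-definite real quadrics form an open (hence Zariski-dense) subset of the real space of quadrics, which is in turn Zariski-dense in the complex space; so a complex-generic SNC choice can always be realized by an admissible real $q$, and the Chern-class integral is constant over the SNC locus. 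With those two clarifications your proof closes.
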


\begin{example}
The ML degree in Proposition \ref{prop:catanese} equals
$\sum_{i=1}^k d_i^2+ \sum_{i<j} d_i d_j  +1$ for $n=2$,
and it equals $\sum_{i \leq j \leq l } d_i d_j d_l - \sum_{i \leq j} d_i d_j + 1$ for $n=3$.
See \cite[Section 4]{CHKS} for some nice geometry.
\end{example}

Since every region of our manifold $\mathcal{U}$ contains at least one critical point, we conclude:

\begin{corollary} \label{cor:bound}
The number of regions of $\,\mathcal{U}$ is at most the ML degree in 
Proposition~\ref{prop:catanese}.
\end{corollary}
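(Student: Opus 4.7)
The plan is to argue that every region of $\mathcal{U}$ contributes at least one real critical point of the Morse function $g$, and that all these critical points sit inside the set of complex critical points of $\log(g)$, whose cardinality is bounded by Proposition~\ref{prop:catanese}.

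First, I would fix a region $C \subset \mathcal{U}$ and examine the restriction $g|_C$. By the earlier proposition, $g$ is continuous and strictly positive on the open set $C$, it vanishes on the boundary $\partial C$ (which lies in the zero set of some $f_i$), and it vanishes at infinity. Because of the exhaustiveness, the superlevel sets $\{g \geq c\}$ intersected with $\overline{C}$ are compact, so $g|_C$ attains a positive maximum at some point $u^\star \in C$. Since $C$ is open and $g$ is smooth on $C$, the point $u^\star$ is necessarily a critical point of $g$, and moreover a real one. Thus each region contributes at least one distinct real critical point (distinct because the regions are pairwise disjoint).

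Next, I would invoke the translation from critical points of $g$ to solutions of $\nabla \log(g) = 0$. Each real critical point of $g$ on $\mathcal{U}$ gives a real solution of the rational system displayed just before Proposition~\ref{prop:catanese}, and hence a complex critical point of $\log(g)$ lying in the very affine variety (\ref{eq:veryaffine}). The number of such complex critical points is the ML degree of $\mathcal{U}$, which by Proposition~\ref{prop:catanese} is at most the coefficient of $z^n$ in (\ref{eq:cataneseseries}). Chaining the inequalities
\[
\#\{\text{regions of } \mathcal{U}\} \;\leq\; \#\{\text{real critical points of } g \text{ on } \mathcal{U}\} \;\leq\; \text{ML degree of } \mathcal{U}
\]
yields the claim.

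The only substantive step is the existence of the maximizer $u^\star$ in each region; once that is in hand, the rest is bookkeeping. The mild subtlety is to justify that the maximum is attained in the interior of $C$ rather than escaping to $\partial C$ or to infinity, but this is exactly what the vanishing of $g$ on $\{f_i = 0\}$ and at infinity (guaranteed by the inequality (\ref{eq:stineq})) is designed to prevent. No additional genericity on $q$ beyond what was already assumed is needed for the bound, since multiplicities in the critical locus only make the inequality weaker.
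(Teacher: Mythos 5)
Your proof is correct and takes essentially the same route as the paper: each region contributes at least one (real) critical point of $g$ because $g$ is positive in the interior, vanishes on the bounding hypersurfaces and at infinity, and so attains a maximum there; distinct regions give distinct critical points; and the total count of complex critical points is the ML degree, bounded by Proposition~\ref{prop:catanese}. The paper leaves this as a one-line observation (``Since every region of our manifold $\mathcal{U}$ contains at least one critical point, we conclude''), and you have simply filled in the compactness argument that makes it precise.
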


The bound in Corollary \ref{cor:bound} is tight for linear polynomials.
This fact, and many more results on ML degrees, will be proved in a forthcoming article by
Berhard Reinke and Kexin Wang. We close this section by sketching a proof for 
hyperplanes in general~position.

Suppose $k \geq n$ and let $d_1 = \cdots = d_k = 1$. Then the generating function (\ref{eq:cataneseseries}) 
becomes
$$ \frac{1}{(1-z)^{k-n} \cdot (1-2z) }. $$
We find that the coefficient of $z^n$ in the Taylor expansion of this rational function equals
$$ 1 + k + \binom{k}{2} + \cdots + \binom{k}{n}. $$
This agrees with the number of all regions in a general arrangement of $k$ hyperplanes in $\RR^n$.
Thus, all complex critical points are real, and each critical point lies in its
own region.

\section{Mountain Passes and Path Tracking}
\label{sec3}

The Mountain Pass Theorem guarantees that all critical points in a region will be connected.
This theorem originates in the Calculus of Variations. We learned about its importance
for numerical computations in real algebraic geometry from the work of Cummings et al.~\cite{CHHS}.

We fix the Morse function $g$ on $\mathcal{U}$ that is given in (\ref{eq:likelihood}).
Since the quadric $q$ is generic, the function $g$ has
only finitely many critical points, and $g$ takes distinct values at these critical points.
Given any starting point  $x_0\in\mathcal{U}$, we will reach one of 
these critical points by numerical path tracking. This is done by
solving the ODE that describes the gradient flow
\begin{equation}
\label{eq:ode}
\dot{x}(t)\,=\,\nabla g(x)(t) ,\qquad
x(0)\,=\,x_0.
\end{equation}
If $x_0$ is chosen at random then the gradient flow will reach a critical point of index $0$.
Loosely speaking, hill climbing in the steepest direction will probably lead us to a mountain peak.

Let $p_1,p_2,\ldots,p_d \in \mathcal{U}$ denote the real critical points of $g$.
Our aim is to build a graph with vertex set $\{p_1,p_2,\ldots,p_d\}$ whose
connected components correspond to the regions of~$\mathcal{U}$.
For each critical point $p_i$, we compute the eigenvalues and eigenvectors of 
the Hessian  $H_{p_i}$. This is the symmetric $n \times n$ matrix of second
derivatives of $g(x)$ evaluated at $x = p_i$.

 Note that ${\rm det}(H_{p_i}) \not= 0$
because $g$ was constructed to be a Morse function.
An eigenvector $v$ of~$H_{p_i}$ is called
{\em unstable} if the corresponding eigenvalue is positive. If this holds then
$$ \quad g(p_i + \epsilon v) \,>\, g(p_i) \qquad \hbox{whenever $\,|\,\epsilon\,|\,$ is small.} $$
If $x_0 = p_i + \epsilon v$ is the starting point 
in (\ref{eq:ode}) then the ODE will lead us to
a critical point $p_j$ of~$g$ that is different from $p_i$.
Whenever this happens, our graph acquires the edge
$p_i \rightarrow p_j$.

We now explain why this method connects all critical points in a fixed region.
First of all, each $p_i$ is connected to some critical point of index $0$.
Suppose we start at a critical point~$p_{i_1}$ with positive index. There is a path from $p_{i_1}$ that limits to some critical point $p_{i_2}$ with $g(p_{i_2}) > g(p_{i_1})$. If $p_{i_2}$ has positive index, then the solution path from $p_{i_2}$ limits to some critical point $p_{i_3}$ with $g(p_{i_3}) > g(p_{i_2})$. Since there are only finitely many critical points, the process must terminate and the last critical point in the sequence $p_{i_1},p_{i_2},\ldots$ has index~$0$.

We are left to show that all index $0$ critical points in the same region will be connected.
To this end, we introduce one more tool from Morse theory.
The {\em stable manifold} of $p_i$ is 
$$
M(p_i)\,=\, \{p_i\}\,\cup\, \{ x\in \mathcal{U} \,\,|\,\, \nabla\, g(x)\neq 0 \text{ and the ODE solution starting from } x \text{ limits to }p_i \}.
$$
The dimension of $M(p_i)$ is the number of stable eigenvectors of $H_{p_i}$.
The stable manifolds for critical points of index $0$ have full dimension $n$ in $\mathcal{U}$.
So, for each region $C$ of $\mathcal{U}$, we~have
\begin{equation}
\label{eq:bigcup}
C\,\,= \! \bigcup_{p_i \in C\text{ index 0}} \! \overline{M(p_i)},
\end{equation}
where the closure is taken in $\mathcal{U}$.
Consider any two critical points $p_{i_\alpha}$ and $p_{i_\beta}$ of index~$0$ in $C$.
Since $C$ is connected,   (\ref{eq:bigcup}) implies that there is a sequence of 
index $0$ critical points $p_{i_1},\ldots,p_{i_s}$ such that
the intersections $\overline{M(p_{i_\alpha})}\cap \overline{M(p_{i_1})}$ 
and $ \overline{M(p_{i_s})} \cap \overline{M(p_{i_\beta})}$ are non-empty,
and also $\, \overline{M(p_{i_j})}\cap \overline{M(p_{i_{j+1}})} \,$ is non-empty
 for~$j=1,\ldots,s-1$.
Corollary \ref{cor:connect} below tells us that each of these intersections
contains at least one critical point of index~$1$.
Thus any pair of index $0$ critical points in $C$ is connected through 
index $1$ critical points. Therefore,
the connectivity of the finite graph we are building for $C$
will be ensured by Corollary~\ref{cor:connect}.

The Morse function $g: C \rightarrow \RR$ satisfies the {\em Palais-Smale condition}.
This means that  every sequence
$\{x_j\}$ in $C$ which satisfies $\,\lim_{j \rightarrow \infty} g(x_j)=\alpha \in \RR\,$
and $\, \lim_{j \rightarrow \infty} \nabla\, g(x_j)=0 \,$ has a convergent subsequence.
The limit of such a convergent subsequence is a critical point in $C$ with critical value $\alpha$.
The Palais-Smale condition holds in our situation because $g$ is positive on $C$, it tends to zero
on the boundary, and $g^{-1}([\alpha_1,\alpha_2])$ is compact for $0 < \alpha_1 < \alpha_2$.
 
 A {\em path} between two points $a$ and $b$ in $C$ is a continuous function $\gamma: [0,1]\mapsto C$ with $\gamma(0)=a$
and $\gamma(1)=b$. We write $\Gamma_{a,b}$ for the set of all such paths.
A set $S \subset C$ {\em separates} $a$ and $b$ if every path  $\gamma \in \Gamma_{a,b}$ contains some point in~$S$. We write $g(\gamma) = \{ g(\gamma(t)) : t \in [0,1] \}$.
The Palais-Smale condition for $g$ is a hypothesis for the next result, which is
   \cite[Theorem~3]{MM}.

\begin{theorem}\label{thm:pass}(Mountain Pass Theorem)
Fix a closed subset $S \subset C$ which separates $a$ and~$b$
and satisfies $\,{\rm max}\bigl( g(x): x \in S\bigr) < {\rm min}\bigl( g(a),\,g(b)\bigr)$.
Then  $\,\omega \,=\, {\rm sup} \bigl( {\rm min}(g(\gamma)) : \gamma \in \Gamma_{a,b})\,$
is a critical value of $g$, and its preimage $g^{-1}(\omega)$ contains a unique critical point of index $1$.
 \end{theorem}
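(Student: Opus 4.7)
\emph{Proof sketch.} My plan is threefold: first confirm $\omega$ is finite and strictly below $g(a)$ and $g(b)$; second, use a Palais--Smale deformation argument to rule out $\omega$ being a regular value; third, identify the index via the handle-attachment picture from Section~\ref{sec2}. Finiteness is immediate: every $\gamma \in \Gamma_{a,b}$ must meet $S$, so $\min g(\gamma) \le \max_{x \in S} g(x) < \min(g(a), g(b))$, which gives $\omega \le \max_S g < \min(g(a), g(b))$. Fix $\epsilon_0 > 0$ with $\omega + 2\epsilon_0 < \min(g(a), g(b))$.

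For the main step, suppose for contradiction that $g^{-1}(\omega)$ contains no critical point. Since $g$ has only finitely many critical values, I can pick $\epsilon \in (0, \epsilon_0)$ so that no critical value lies in $[\omega - \epsilon, \omega + \epsilon]$. Palais--Smale then supplies $\delta > 0$ with $\|\nabla g(x)\| \ge \delta$ on $g^{-1}([\omega - \epsilon, \omega + \epsilon])$. I would then take a smooth cutoff $\chi : \RR \to [0,1]$ equal to $1$ on $[\omega - \epsilon/2, \omega + \epsilon/2]$ and supported in $[\omega - \epsilon, \omega + \epsilon]$, and integrate the vector field $X(x) = \chi(g(x))\,\nabla g(x)/\|\nabla g(x)\|^2$. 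Along a trajectory, $\tfrac{d}{dt} g(x(t)) = \chi(g(x(t))) \in [0,1]$, so $g$ is non-decreasing and the flow is complete because the superlevel sets of the exhaustive $g$ are compact. The time-$\epsilon$ flow $\Phi_\epsilon$ raises every point with $g$-value in $[\omega - \epsilon/2, \omega + \epsilon/2]$ to $g$-value at least $\omega + \epsilon/2$, and it fixes $a$ and $b$ because their $g$-values exceed $\omega + \epsilon_0$, outside the support of $\chi \circ g$. Picking $\gamma_0 \in \Gamma_{a,b}$ with $\min g(\gamma_0) \ge \omega - \epsilon/2$ (possible since $\omega$ is a supremum), the deformed path $\Phi_\epsilon \circ \gamma_0$ is still in $\Gamma_{a,b}$ and satisfies $\min g(\Phi_\epsilon \circ \gamma_0) \ge \omega + \epsilon/2$, contradicting the definition of $\omega$. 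Hence $\omega$ is a critical value.

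For the index, $g$ is Morse with pairwise distinct critical values, so $g^{-1}(\omega)$ contains exactly one critical point $p$, of some index $\ell$. By the handle-attachment description behind Theorem~\ref{thm:euler}, as $c$ decreases through $\omega$ the superlevel set $\{g \ge c\}$ acquires a single $\ell$-handle. The value $\omega$ can be recast as $\omega = \sup\{c : a \text{ and } b \text{ lie in the same path-component of } \{g \ge c\}\}$: for $c < \omega$ some near-optimal path in $\Gamma_{a,b}$ lives in $\{g \ge c\}$, while for $c > \omega$ no admissible path survives. So crossing $\omega$ from above must merge the components containing $a$ and $b$, which is impossible for $\ell = 0$ (a $0$-handle only adds a disjoint ball) and for $\ell \ge 2$ (an $\ell$-handle leaves $\pi_0$ unchanged). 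Hence $\ell = 1$.

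The main obstacle I expect is clean bookkeeping in the deformation step: choosing $\chi$ so that the flow simultaneously lifts points of the critical strip uniformly and leaves $a, b$ fixed, and verifying completeness of the truncated gradient flow. Both follow once the Palais--Smale condition and the exhaustivity of $g$ are in hand, but they are the load-bearing pieces of the argument.
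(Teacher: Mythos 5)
The paper itself does not prove Theorem~\ref{thm:pass}: it is invoked as \cite[Theorem~3]{MM} and used as a black box to derive Corollary~\ref{cor:connect}. Your proposal supplies a proof that the paper delegates to the literature, and it is correct. The two load-bearing steps are exactly what one would want here: a quantitative deformation lemma, built from the truncated and normalized gradient field $\chi(g)\,\nabla g/\|\nabla g\|^2$, to show $\omega$ is a critical value; and a handle-attachment/$\pi_0$ argument to pin the index at $1$. The deformation step is sound --- $g>0$ on $C$ gives $\omega>0$, so $g^{-1}([\omega-\epsilon,\omega+\epsilon])$ is a compact set on which $\|\nabla g\|$ is bounded below (one can get this directly from compactness and the absence of critical values there, without even invoking Palais--Smale); the vector field, extended by zero off the support of $\chi\circ g$, is globally defined and bounded, and trajectories stay in the compact set $\{g\ge g(x_0)\}$ so the flow is complete for $t\ge 0$; and $a,b$ are fixed since $g(a),g(b)>\omega+\epsilon$. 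For the index, your recasting of $\omega$ as $\sup\{c : a\text{ and }b \text{ in the same component of } \{g\ge c\}\}$ is clean, and the contradiction --- a $0$-handle only adjoins a disjoint ball, while an $\ell$-handle with $\ell\ge 2$ is attached along the connected set $S^{\ell-1}\times D^{n-\ell}$ and so cannot merge components --- correctly forces $\ell=1$; this matches the paper's (nonstandard) convention that the index counts positive Hessian eigenvalues, so that decreasing $c$ through a critical value of a superlevel set attaches an $\ell$-handle. Uniqueness of the critical point in $g^{-1}(\omega)$ is already guaranteed by genericity of $q$, as you use. Altogether a complete argument for the result the paper cites.
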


\begin{corollary} \label{cor:connect}
Let $p_i$ and $p_j$ be index $\,0$ critical points of the Morse function $g$ on $C$, and 
suppose that $\,S=\overline{M(p_i)}\cap \overline{M(p_j)}\,$ is non-empty.
Then $\,S$ contains a critical point of index~$1$.
\end{corollary}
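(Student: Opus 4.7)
My plan is to apply the Mountain Pass Theorem (Theorem~\ref{thm:pass}) with the choice $a = p_i$, $b = p_j$. The first task is to produce a closed separator $S_0 \subset C$ with $\max g|_{S_0} < \min(g(p_i), g(p_j))$. Since $p_i$ and $p_j$ are index-$0$ critical points, they are strict local maxima of $g$; by the Morse lemma at $p_i$ one can take $S_0$ to be a small topological sphere in $C$ around $p_i$, chosen small enough that it excludes $p_j$ and all other critical points and so that $g|_{S_0} < \min(g(p_i), g(p_j))$. Any continuous path from $p_i$ to $p_j$ must exit this sphere, so $S_0$ separates $p_i$ from $p_j$ in $C$. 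Theorem~\ref{thm:pass} then produces a critical value $\omega \le \max g|_{S_0}$ and a unique index-$1$ critical point $q$ with $g(q) = \omega$.

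It remains to show that $q \in S$. The strategy is to analyze how the topology of the super-level sets $\{g>c\}$ changes as $c$ crosses $\omega$. Let $U_i^c$ and $U_j^c$ denote the connected components of $\{g > c\}$ containing $p_i$ and $p_j$. By the minimax definition of $\omega$, for $c$ slightly above $\omega$ these components are distinct (otherwise one would have a path $\gamma$ from $p_i$ to $p_j$ inside $\{g > c\}$ with $\min g(\gamma) > c > \omega$, contradicting the supremum), whereas for $c$ slightly below $\omega$ they merge into a single component. Since $q$ is the only critical point of $g$ at level $\omega$, this merger is governed locally by $q$. In Morse normal coordinates at $q$, where $g = \omega + y_1^2 - y_2^2 - \cdots - y_n^2$, the two rays of the one-dimensional unstable manifold of $q$ (along $\pm y_1$) enter the two components being merged, namely $U_i^{\omega+\epsilon}$ and $U_j^{\omega+\epsilon}$. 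Tracking the upward gradient flow along these rays places $q$ in the closures of $M(p_i)$ and of $M(p_j)$, which gives $q \in S$.

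The main obstacle is the last inference: showing that the $+y_1$ ray's flow genuinely reaches $p_i$ (and not some other index-$0$ critical point that might lie inside the same component $U_i^{\omega+\epsilon}$). This is precisely where the hypothesis $S \neq \emptyset$ must be used. Non-emptiness of $\overline{M(p_i)} \cap \overline{M(p_j)}$ is what forces the two basins to share a direct boundary across $q$, rather than being separated only by a chain of intermediate peaks that would push the mountain-pass saddle elsewhere. A careful formalization---for instance, by restricting the minimax class from all paths in $C$ to paths inside $\overline{M(p_i)} \cup \overline{M(p_j)}$, for which $S$ itself acts as a separator and the mountain-pass value is bounded by $\max g|_S$---is the most delicate step and should be the focus of the full proof.
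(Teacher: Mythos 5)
Your plan and the paper's proof both invoke the Mountain Pass Theorem with $a=p_i$ and $b=p_j$, but you instantiate the separator differently, and that difference is where all the difficulty concentrates. You take a small Morse-lemma sphere $S_0$ around $p_i$, which is easy to verify as a separator; the price is that the resulting critical point $q$ comes with no a priori relation to $S = \overline{M(p_i)}\cap\overline{M(p_j)}$, and you then must argue $q\in S$ from the superlevel-set picture. The paper instead takes $S$ itself as the separator in Theorem~\ref{thm:pass}, so that the minimax value $\omega$ is automatically at most $\max_{S} g$, and then states that the index-$1$ critical point lies in $S$.

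You have correctly isolated the genuine gap in your version: the global minimax over \emph{all} paths from $p_i$ to $p_j$ may produce a saddle on the common boundary of $M(p_i)$ and some intermediate basin $M(p_r)$, not on $\overline{M(p_i)}\cap\overline{M(p_j)}$; the one-dimensional unstable rays of $q$ may flow to peaks other than $p_i$ and $p_j$ even though they enter the ``right'' superlevel-set components. Your proposed repair---restricting the competing paths so that $S$ (equivalently, working inside $\overline{M(p_i)}\cup\overline{M(p_j)}$) is the separator---is precisely the paper's setup, so you reconstructed the intended argument, but as a patch you left unformalized rather than as the starting point. As written, then, your proposal does not yet constitute a proof: the middle paragraph's conclusion that $q\in S$ is not established, and you say so yourself. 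What the paper buys by choosing $S$ directly is that it never needs the sphere or the merging-components analysis; what it still owes the reader (and what your formalization would also need) is the verification that $S$ genuinely separates $p_i$ from $p_j$ in $C$ and that the critical point delivered by the theorem lands in $S$ rather than merely at the level $\omega$.
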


\begin{proof}
The set $S$ satisfies the hypotheses of Theorem \ref{thm:pass} for
$a = p_i$ and $b = p_j$. Let $p_\ell$ be the critical point which is
promised by Theorem  \ref{thm:pass}. The point $p_\ell$ necessarily lies in $S$.
\end{proof}

We have shown that the regions $C$ of $\mathcal{U}$ can be identified by
gradient ascent from index~$1$ critical points. Here we start in the two
 directions determined by the unique unstable eigenvector.
The resulting graph on the index $0$ critical points is guaranteed to be connected.
We now summarize our approach. Algorithm \ref{alg:Hauenstein}
 forms the basis of {\tt HypersurfaceRegions.jl}.

\begin{algorithm}[h]
\caption{Computing the regions of an arrangement of $k$ hypersurfaces in~$\RR^n$} \label{alg:Hauenstein}
\KwIn{Polynomials $f_1,f_2,\ldots,f_k $ in $\RR[x_1,\ldots,x_n]$.}

\KwOut{All sign patterns in $\{-,+\}^k$ that are realized by $f_1,\ldots,f_k$, and each region with that sign pattern,
along with its  Euler characteristic.}

Calculate the critical points $\{p_1,\ldots,p_d\}$ of $\log g(x)$ and record the sign vectors 
$$\sigma = \bigl( {\rm sign}(f_1(p_i)),\ldots,{\rm sign}(f_k(p_i))\bigr)\, \in\, \{-1,+1\}^k \,\,\,\,
 {\rm for} \,\,\, i=1,\ldots,d. $$

Calculate the Hessian matrix $H_{p_i}$ of each critical point $p_i$, verify that it is invertible, and compute its index and the unstable eigenvectors.

\ForEach{sign pattern $\sigma$}{
 Identify the set $\{p_{i_1},\ldots,p_{i_j}\}$ of all critical points with sign pattern~$\sigma$.
 
Initialize a graph $G_\sigma$ which has this set as its vertex set.

\uIf{
there is only one vertex $p_{i_r}$ of index $0$ in $G_\sigma$}{Add edges between all index $>0$ critical points and $p_{i_r}$ to $G_\sigma$.
}

\uElse{
For each index $1$ critical point $p_{i_\ell}$,
identify the unstable eigenvector $v$, and solve the ODE (\ref{eq:ode}) starting from $p_{i_\ell}+\epsilon v$ and $p_{i_\ell}-\epsilon v$ for small $\epsilon>0$.

Compute the limit points (one or two). Add edge(s) from $p_{i_\ell}$ to these in $G_{\sigma}$.

For each $p_{i_\ell}$ of index $> 1$, pick an unstable eigenvector $v$, solve (\ref{eq:ode})  from 
$p_{i_\ell}\!+\!\epsilon v$ for small $\epsilon>0$, and add to $G_\sigma$ the edge from $p_{i_\ell}$ to the limit~point.
}

Identify the connected components of the graph $G_\sigma$. Compute the Euler characteristic of the corresponding regions using the formula in \eqref{eq:chi}.
}
\end{algorithm}

\begin{proof}[Correctness proof of Algorithm \ref{alg:Hauenstein}]
Each region in $\mathcal{U}$ has constant sign pattern for the evaluation of $(f_1,\ldots,f_k),$ so it suffices to track the ODE paths between critical points with the same sign pattern. 
Tracking the ODE along the positive and negative unstable eigenvector directions of index $1$ critical points connects all index $0$ critical points in the same region. 

A sign pattern $\sigma$ is realizable if and only if it is attained by some critical point.
We now fix $\sigma$, and we consider all critical points with sign pattern $\sigma$.
For every index $>1$ critical point $p_{j_1}$, the ODE path connects it with some critical point $p_{j_2}$ with $g(p_{j_2})>g(p_{j_1})$. If again $p_{j_2}$ has index $>1$, then the ODE path from $p_{j_2}$ limits to some critical point $p_{j_3}$ with $g(p_{j_3}) > g(p_{j_2})$. Since there are only finitely many critical points, the process 
terminates, and $p_{j_1}$ is connected to some index $0$ or $1$ critical point via a sequence of paths.
Therefore, the connected components of the graph $G_\sigma$ correspond exactly to the regions in $\mathcal{U}$
with sign pattern $\sigma$. The statement about the Euler characteristic 
follows from Theorem~\ref{thm:euler}.
\end{proof}

Frequently one is interested in the regions of an arrangement in 
projective space $\RR \PP^n$.
Algorithm \ref{alg:projective} addresses this point.
To begin with, we take the  same input as before, namely $f_1,f_2,\ldots,f_k
\in \RR[x_1,\ldots,x_n]$. We further assume that
Algorithm \ref{alg:Hauenstein} has already been run.

\begin{algorithm}[ht]
%\scriptsize
\caption{regions of a projective arrangement of $k$ hypersurfaces in $\RR \PP^n$} \label{alg:projective}
\KwIn{All data that were used and computed in Algorithm \ref{alg:Hauenstein}.}

\KwOut{A list of all regions in projective space $\RR \PP^{n}$, grouped by realizable
sign pattern pairs in $\{-,+\}^{k}$.}

Homogenize $f_1,\ldots,f_k  \in \RR[x_1,\ldots,x_n]$ to $f_1'(x_0,x_1,\ldots,x_n),\ldots,f_k'(x_0,x_1,\ldots,x_n)$.

Compute the regions of $\,\mathcal{U}_{\infty}=\bigl\{\,u \in \RR^{n-1}:  \, f_1'(0,1,u)
% \cdot f_2'(0,1,u) 
 \,\cdots \,f_k'(0,1,u)\neq 0 \bigr\}$ and record the list $\{ q_1,\ldots,q_{d'}\}$ of representatives from each region.

\ForEach{$q_i$}{
    Find the largest absolute value among all zeros of the polynomials
    $f_j(t,t q_i)$ for $i=1,\ldots,d$ and $j=1,\ldots,k$. Fix a real number $\lambda$
    larger than that value.

    Solve the ODE (\ref{eq:ode}) with starting points $\lambda(1,q_i)$ and 
    $-\lambda(1,q_i)$. Record the limiting critical points. The regions of these points  are connected in~$\RR \PP^n$.}

The regions not visited in the previous step are either {\em bounded} or {\em undecided}.
\end{algorithm}

\begin{proof}[Correctness proof of Algorithm \ref{alg:projective}]
All unbounded regions of $\mathcal{U}$ intersect the hyperplane  at infinity, denoted $\{x_0=0\}$.
If two of them are connected in $\RR \PP^n$, then their intersections with
$\{x_0=0\}$ share a region of $\mathcal{U}_{\infty}$.
We sample points $q_i$ from each region of $\mathcal{U}_{\infty}$.
The choice of $\lambda$ guarantees that
 $\lambda(1,q_i)$ and $-\lambda(1,q_i)$ lie in
 unbounded regions of $\RR^n$ which are adjacent to the
region of $q_i$ at infinity. In particular, 
$\lambda(1,q_i)$ and $-\lambda(1,q_i)$ lie in the same projective region,
and this region contains the unbounded affine regions that are identified
by the ODE~(\ref{eq:ode}). One unbounded
region can get matched multiple times in this process.
Hence, even three or more regions in $\RR^n$
can get fused to the same region in $\RR \PP^n$.
\end{proof}

In our implementation of Algorithm \ref{alg:projective} we discard critical points that are close to the hypersurface $\mathbb R^{n-1}\backslash \mathcal U_\infty$. 
These points cause numerical problems when solving the corresponding ODE. Hence, that
part of our software is based on a heuristic.

We now explain the distinction between {\em bounded}
and {\em undecided}, alluded to in Step 7. It is a feature of 
numerical computations that tangencies and singularities are hard to detect.

\begin{example}[$n=2,k=1$]
Fix a real number $\epsilon$ and consider the plane curves given~by
$$ f_\epsilon(x,y) \,=\, y^2 - (x-1)(x^2+\epsilon)
\qquad  {\rm and} \qquad
g_\epsilon(x,y) \,=\, y^2+\epsilon x^2 - x . $$
 Each curve is viewed as an arrangement in $\RR^2$ with $k=1$.
 The arrangements $\{f_\epsilon\}$ and $\{g_\epsilon\}$ have two regions for $\epsilon \geq 0$
and three regions for $\epsilon < 0$. 
Note the topology changes when we pass from $\epsilon > 0$ to $\epsilon = 0$.
The cubic $f_0$ acquires an isolated point, and the conic $g_0$ becomes tangent to the
line at infinity. Both regions of $\{g_0\}$ are unbounded, but only one of them
is recognized as unbounded by Algorithm \ref{alg:projective}. The
interior of the parabola remains undecided.
\end{example}

We identify bounded regions as follows.
We fix a small parameter $\delta > 0$.
A region in Step~7 is declared  {\em bounded} if it is
contained in $ -1/\delta < x_1 < 1/\delta$.
In practice, we compute critical points of our Morse function on the hyperplanes $x_1 = 1/\delta$ and $x_1 = -1/\delta$. 
We process these critical points similar to what is done in Algorithm \ref{alg:projective}. 
Namely, for each critical point $q$ on the hyperplane $x_1 = 1/\delta$, we find the largest value among all zeros of the polynomials
$f_j(t,t q)$ for $j=1,\ldots,k$ that are smaller than $1/\delta$. 
Fix a real number~$\lambda$
larger than that value, but smaller than $1/\delta$. 
We then solve the ODE (\ref{eq:ode}) with starting points~$\lambda \cdot (1,q)$ and record the limiting critical points.  We proceed similarly for the other hyperplane $x_1 = -1/\delta$.

If a region in $\RR^n$  is not visited during this process, then it does not intersect either hyperplane.
If its critical points satisfy $ -1/\delta < x_1 < 1/\delta$, then it is bounded. 
If a region is not bounded or unbounded, then it is declared 
{\em undecided}.  Thus, undecided regions are
close to being tangent to the hyperplane at infinity, where ``close'' refers to
the tolerance~$\delta$.

\section{How to Use the Software}
\label{sec4}

Our software is easy to use, even for beginners.
We here offer a step-by-step introduction. For a complete overview over our implementation we refer to the
online documentation at
$$ \hbox{
 \url{https://juliaalgebra.github.io/HypersurfaceRegions.jl/}.}
$$
The first step is to download the programming language {\tt Julia}.
 One navigates to the website  \url{https://julialang.org/downloads/} for the latest version.
After it is downloaded and installed, we start {\tt Julia} and enter the package manager by pressing the {\tt ]} button. 
Once in the package manager, type {\tt add HypersurfaceRegions} and hit enter. This installs
{\tt HypersurfaceRegions.jl}. One leaves the package manager by pressing the back space button. 
To load our software into the current Julia session, type the following command:
\begin{lstlisting}[language=code]
  using HypersurfaceRegions
\end{lstlisting}
Now, we are ready to use the implementation. Let's compute our first arrangement!!

\begin{figure}[h]
\begin{center}
\includegraphics[height = 7.7cm]{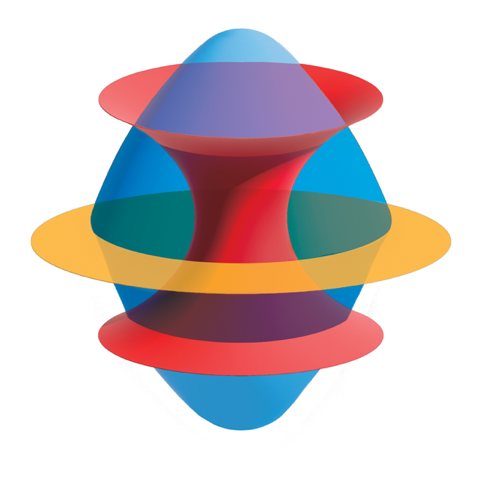}
\vspace{-0.24in}
\caption{\label{fig:hyperboloid}
A hyperboloid and two paraboloids.
The paraboloids intersect in a circle that spans a plane.
This arrangement of four surfaces has $12$ regions in $\RR^3$. Four are contractible.
}
\end{center}
\end{figure}

Being ambitious, we 
start with an example in $\RR^3$.
Consider the four surfaces in Figure~\ref{fig:hyperboloid}.
The yellow plane is defined by $z=0$. The three quadrics are
invariant under rotation around the $z$-axis. The two paraboloids are
defined by $z = x^2+y^2-1$ and $z = -x^2-y^2+1$.
The hyperboloid is defined by $x^2+y^2-\frac{3}{2} z^2 = \frac{1}{4}$.
We input this arrangement into our software:

\begin{lstlisting}[language=code]
@var x y z
f = [z, x^2 + y^2 - 1 + z, x^2 + y^2 - 1 - z, x^2 + y^2 - 1/4 - 3/2z^2]
regions(f)
\end{lstlisting}

The output  for this input is shown on the left in Figure \ref{fig:hypout}.
 We see that $\mathcal U$ has $12$ regions,
 and each of them is uniquely identified by its sign pattern
 $\sigma \in \{-,+\}^4$.   Six regions lie outside the hyperboloid,
 which means that $\sigma_4 = +$. Each of them is
 homotopy equivalent to a circle, so $\chi = 0$.
 The other six regions lie inside the hyperboloid, which means
 $\sigma_4 = -$.

Two of the regions inside the hyperboloid are unbounded and also circular ($\chi = 0$).
Finally, there are two bounded regions and two regions tangent to the hyperplane at infinity.
  They are all contractible ($\chi = 1$). 
We compute this information by setting a flag as follows:
\begin{lstlisting}[language=code]
  regions(f; bounded_check = true)
\end{lstlisting}
The output of
{\tt HypersurfaceRegions.jl} for this input is shown on the right in Figure \ref{fig:hypout}.
This also reminds us that the number of real critical points 
depends on $q(x)$, which is random.

\begin{figure}[h]
\begin{center}
\includegraphics[height = 10.3cm]{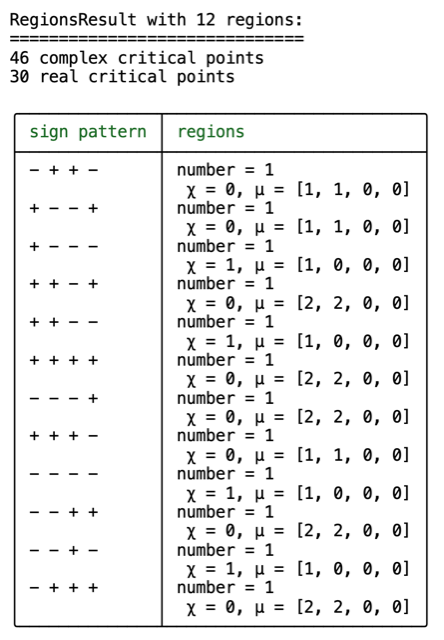} \quad \,
\includegraphics[height = 9.8cm]{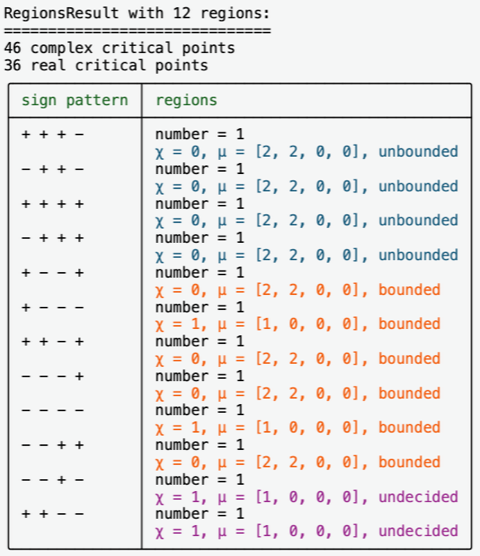}
\caption{\label{fig:hypout}
The output of {\tt HypersurfaceRegions.jl}
for the arrangement shown in Figure \ref{fig:hyperboloid}.}
\end{center}
\end{figure}

One application of our method is the study of
discriminantal arrangments. Such arrangements
arise whenever one is interested in the topology
of real varieties that depend on parameters.
The one-dimensional case of this is known as
{\em Real Root Classification}, where one expresses
the number of real roots of a zero-dimensional
system  as a function of parameters.

To use the current version of {\tt HypersurfaceRegions.jl} in this context,
it is assumed that the discriminant is known
and that it breaks up into smaller factors. Here is an example.

\begin{example}[Polynomial of degree $8$] \label{ex:dis8}
Consider the following polynomial in one variable~$t$:
$$ f(t) \,\,=\,\,  t^8\,+\,x \cdot t^7\,+\,(x+y)\cdot t^6\,+\,t^5\,+\,(x+y) \cdot t^4\,+\,(y+1) \cdot t^3
\,+\,t^2\,+\,(x+y)\cdot t. $$
The coefficients depend linearly on two parameters $x$ and $y$. The discriminant of
$f(t)$ is a polynomial of degree $14$ in $x$ and $y$. This discriminant factors into
four irreducible factors. Two of these have multiplicity two.
We input the four factors into our software as follows.
\begin{lstlisting}[language=code]
@var x y
f = [x + y, 
     23x^6 + 60x^5*y + 50x^4*y^2 + 16x^3*y^3 + 3x^2*y^4 - 78x^5 
        - 336x^4*y - 478x^3*y^2 - 284x^2*y^3 - 76x*y^4 - 12y^5 - 87x^4 
        - 144x^3*y + 54x^2*y^2 + 180x*y^3 + 68y^4 + 28x^3 + 24x^2*y 
        - 58x*y^2 - 56y^3 - 87x^2 - 300x*y - 208y^2 - 78x - 72y + 23,
    x + 3y + 1,
    5x^2 + 4x*y + y^2 - 6x - 4y + 5]
regions(f)
\end{lstlisting}

\begin{figure}[h]
\begin{center}
\includegraphics[height = 9cm]{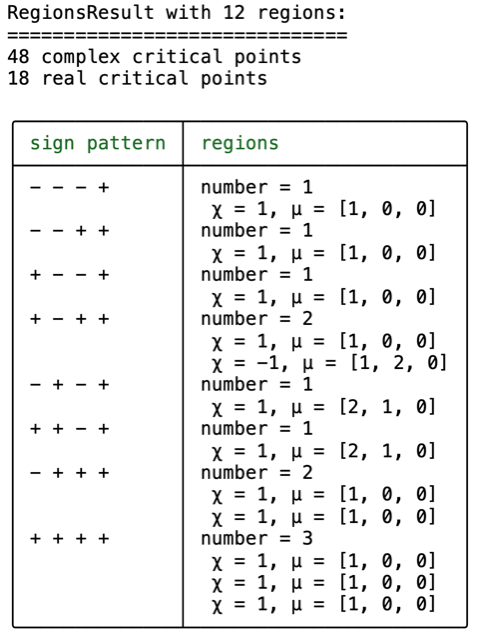} 
\caption{\label{fig:uni8out}
Output  for the discriminantal arrangement in Example \ref{ex:dis8}.}
\end{center}
\end{figure}

The output is shown in  Figure \ref{fig:uni8out}.
Eight of the $16$ sign patterns are realizable.
The parameter plane is divided into $12$ regions.
For instance, $\sigma = (+,-,+,+)$ contributes two regions,
one contractible and one with two holes ($\chi=-1$).
The latter parametrizes
polynomials with only two real roots.
One sample point in that region is $(x,y) = (-1,5)$.
\end{example}

You are now invited to run the following code and to match its output
  with Figure~\ref{fig:concentric}:
   \begin{lstlisting}[language=code]
@var x y z 
f = [x^2 + y^2 + z^2 - 1, x^2 + y^2 + z^2 - 4, 100*x^2 + 100*y^2 + z^2 - 9]
R = regions(f, bounded_check = true)
\end{lstlisting}
A range of additional features is available in {\tt HypersurfaceRegions.jl}.
For instance,
one finds the critical points in the $i$-th region of the output by running the following command:
\begin{lstlisting}[language=code]
Ri = regions(R)[i]
critical_points(Ri)
\end{lstlisting}
One can also locate the region to which a given point (e.g. the origin) belongs, as follows:
\begin{lstlisting}[language=code]
p = [0, 0, 0]
membership(R, p)
\end{lstlisting}

We next discuss a few implementation details.  We use \emph{monodromy} \cite{DHJLLS}, implemented in {\tt HomotopyContinuation.jl} \cite{julia}, for computing the critical points of $g(x)$. The ODE solver from~{\tt DifferentialEquations.jl}~\cite{RN2017} is used for
the Morse flows~(\ref{eq:ode})
between critical points.

To find critical points, we solve the  rational function equations~$\nabla\, {\rm log}(g(x)) = 0$ in
\eqref{eq:likelihood}.
 These are $n$ equations in $n$ variables $x_1,\ldots,x_n$ and $k+1$ parameters $u=(s_1,\ldots,s_k,t)$. For the monodromy, we need a start pair $(x_0, u_0)$ such that $x_0$ is a solution for $\nabla\, {\rm log}(g(x)) = 0$ 
  with parameters $u_0$. Since $\nabla\, {\rm log}(g(x))$ is linear in $u$, we can use linear algebra to get such a start pair. Indeed, we can sample a random point $x_0$ and compute $u_0$ by solving linear equations. However, this works only if $k+1>n$, and if $\nabla\, {\rm log}(g(x_0))$ has full rank. To avoid these issues, we use the following trick. If $k+1\leq n$, we define new auxiliary parameters $v=(v_1,\ldots,v_{n-k})$. Using monodromy, we solve the system $\nabla\, {\rm log}(g(x)) - (Au + Bv)=0$, where $A\in\mathbb C^{n\times (k+1)}$ and $B\in\mathbb C^{n\times (n-k)}$ are random matrices. Afterwards, we track the homotopy $\nabla\, {\rm log}(g(x)) - \lambda\cdot(Au + Bv)=0$ from $\lambda = 1$ to $\lambda = 0$ using the solutions we have just computed. The \emph{parameter continuation theorem} \cite{Sommese:Wampler:2005,  BOROVIK2025102373} asserts that this approach works. 

We modified the default options in {\tt HomotopyContinuation.jl} for the monodromy loops.
In {\tt HypersurfaceRegions.jl}, a computation finishes
if the last $10$ new monodromy loops have not provided any new solutions. The 
number $10$ is a conservative parameter here. Its purpose is 
to increase the probability of finding all critical points. 
The tolerance parameter~$\delta$ for bounded regions,
 as described in the end of Section \ref{sec3},
is set to the default value~$\delta = 10^{-5}$.

\section{Experiments}
\label{sec5}

This final section is dedicated to systematic experiments with our software.
We experimented by running
{\tt HypersurfaceRegions.jl} on random instances, drawn from two classes:
\begin{enumerate}
\item arrangements defined by random polynomials; \vspace{-0.2cm}
\item arrangements defined by random spectrahedra. 
\end{enumerate}
The first class is self-explanatory. To create
(\ref{eq:input}),  we choose $f_i$ at random
from some probability distribution on the space of
inhomogeneous
polynomials of degree $d$ in $n$ variables.

The second class requires an explanation.
We fix positive integers $n$ and $m$,
and we draw $n+1$ samples 
$A_0,A_1,\ldots,A_n$ from the
 $\binom{m+1}{2}$-dimensional
space of symmetric $m \times m$ matrices. 
We consider the arrangement
defined by the $k = 2^m-1$ principal minors of the matrix
\begin{equation}
\label{eq:A(x)} A(x) \,\, = \,\, A_0 \, +\, x_1 A_1 \, + \,\cdots \, + \, x_n A_n . 
\end{equation}
The distinguished region where all $k$ principal minors are positive 
is the {\em spectrahedron}.

The stratification of symmetric matrices
by signs of principal minors was studied by
Boege, Selover and Zubkov in \cite{BSZ}. 
%They investigated this in the
%space of all symmetric matrices. In our experiments,
We explore  random low-dimensional slices of the
{\em Boege-Selover-Zubkov stratification}.
How many regions get created, and what is their topology?

\begin{example}[$n \! = \! m \! = \!3$]
\label{ex:cayley}
A prominent spectrahedron is the {\em elliptope}, which is given~by
$$ A(x,y,z) \,\, = \,\,
 \begin{bmatrix} 
  1 & x & y \\
  x & 1 & z\\
  y & z & 1\end{bmatrix}.
  $$
  The $2 \times 2$ minors of this matrix give the six facet equations of
  the $3$-dimensional cube $[-1,1]^3$.
  Our arrangement lives in $\RR^3$, and it  consists
of these six planes plus the cubic surface 
\begin{equation}
\label{eq:cayleycubic} {\rm det}(A) \,\, = \,\,
2 xyz - x^2 - y^2 - z^2 + 1.
\end{equation}
For an illustration see Figure \ref{fig:elliptope}.
Only four of the six facets are shown 
for better visibility.

\begin{figure}[h]
\begin{center}
\includegraphics[height = 6.7cm]{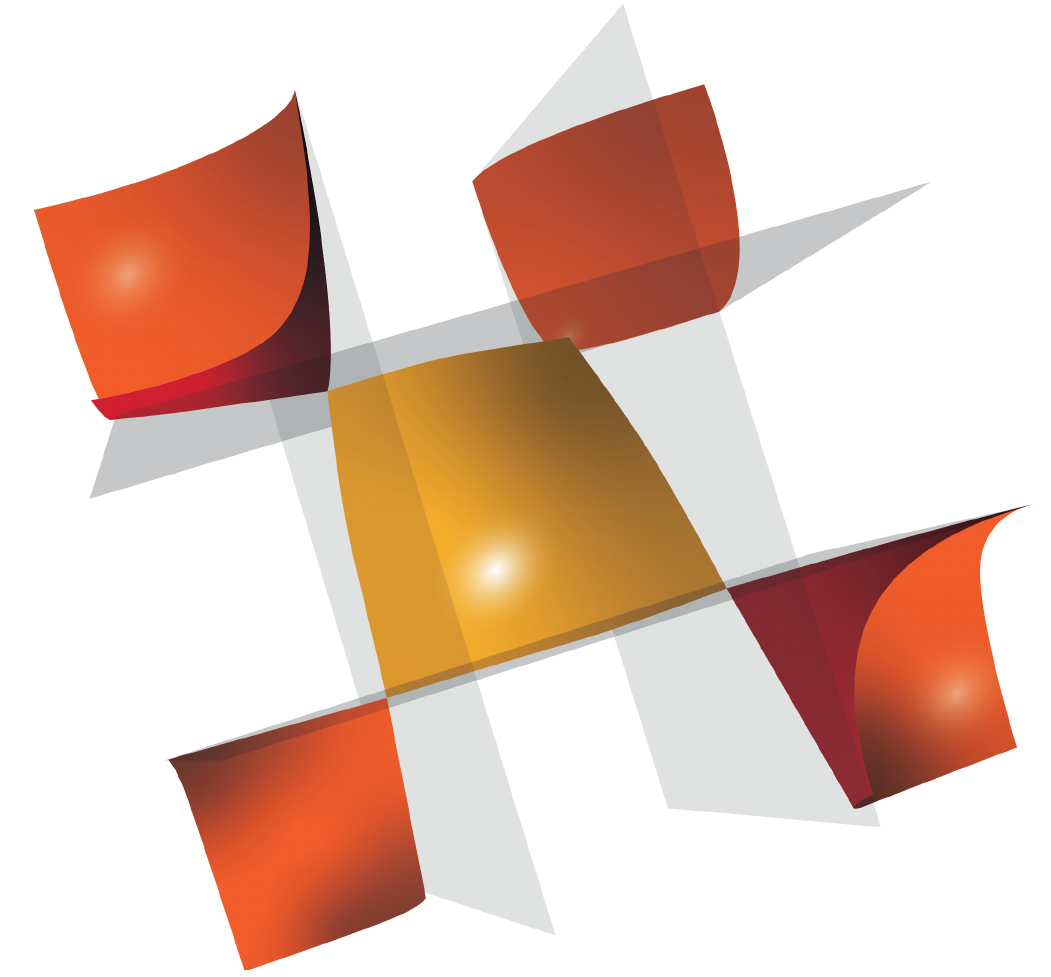}
\caption{\label{fig:elliptope}
Arrangement given by the elliptope
and the facet planes of the surrounding cube.}
\end{center}
\end{figure}

We run {\tt HypersurfaceRegions.jl} on this $k=7$ instance.
The arrangement has $43$ regions. All have Euler characteristic $1$.
The six facet planes of the cube divide $\RR^3$
into $27 = 1 \!+\! 8 \!+\! 12 \!+\! 6$ regions. One is bounded
(the cube itself) and $26$ are unbounded cones. 
The surface (\ref{eq:cayleycubic}) divides the cube into $5$ regions,
and it divides four of the cones into $4$ unbounded regions,
for a total of $27 \!+\! 4 \!+\! 12 = 43$ regions.
The projective arrangement has $39$ regions.
\end{example}

We now describe our experiments with arrangements of generic hypersurfaces.
We run {\tt HypersurfaceRegions.jl} on $k$ polynomials in $n$ variables
of degree $d=2$ and $d=3$, where the
coefficients are drawn independently from the standard Gaussian.
Each experiment
is carried out $N$ times, where $N$ is
inverse proportional to the running time for each instance.

Our findings are presented in Table \ref{tab:zwei} for $d=2$ and in
Table \ref{tab:drei} for $d=3$.
For each row we fix the parameters $n$ and $k$.
The time is the average running time per instance.
This is measured in seconds. 
The fourth column concerns the total number of
regions.
We report the minimum number and the maximum number
across all instances. 
The fifth column similarly reports the range 
for the number of realizable sign vectors $\sigma$.
The sixth column gives the maximal number of
regions per fixed sign pattern.
And, finally, in the last column we report the
minimum and maximum observed for the
Euler characteristics $\chi$ of the regions.

\begin{table}[h]
\begin{center}
\begin{tabular}{|c|c|c|c|c|c|c|}
\hline
$n$ & $k$ &  time & min-max \#reg & min-max $\# \sigma$ & max $\#$reg$/\sigma$ & min-max $\chi$ \\ 
\hline
$3$ & $8$ & $15.5$ & $46, 325$ & $30 , 196$ & $11$ & $-5, 2$ \\
\hline
$3$ & $9$ &  $28.9$ & $72, 462$ & $41, 319$ & $7$ & $-2, 2$ \\
\hline
$3$ & $10$ &  $68.1$ & $108, 612$ & $83 , 437$ & $8$ & $-3 , 1$ \\
\hline
$4$ & $4$ &  $15.4$ & $9, 38$ & $8, 16$ & $7$ & $-6, 6$ \\
\hline
$4$ & $5$ &  $32.6$ & $19, 87$ & $15, 32$ & $7$ & $-7 , 3$ \\
\hline
$4$ & $6$ &  $72.3$ & $47, 192$ & $32, 64$ & $9$ & $-7 , 2$ \\
\hline
$5$ & $5$ & $142.9$  &  $34, 67$& $24, 32$  & $8$  & $-8, 4$\\
\hline
\end{tabular}
\vspace{-0.12in}
\end{center}
\caption{\label{tab:zwei}
Random affine arrangements defined by $k$ quadrics in $\RR^n$.}
\end{table}

Let us discuss the first row in Table \ref{tab:zwei}.
The code runs $15.5$ seconds, and it identifies $\rho $ distinct regions in $\RR^3$,
where $46 \leq \rho \leq 325$. The number of realizable sign
patterns is at most $196$.
This is less than the number  $2^k = 2^8 = 256$ of all sign patterns.
There were up to $11$ regions per sign pattern. The
Euler characteristic of any region was between $-5$ and $2$.

\begin{table}[h]
\begin{center}
\begin{tabular}{|c|c|c|c|c|c|c|}
\hline
$n$ & $k$ &  time & min-max \#reg & min-max $\# \sigma$ & max $\#$reg$/\sigma$ & min-max $\chi$ \\ 
\hline
$3$ & $5$ &  $25.0$ & $34, 135$ & $27, 32$ & $8$ & $-2, 2$ \\
\hline
$3$ & $6$ &  $59.5$ & $67,186$ & $52,64$ & $7$ & $-2, 2$ \\
\hline
$3$ & $7$ &  $107.5$ & $123, 280$ & $87, 126$ & $7$ & $-2, 2$ \\
\hline
$4$ & $4$ &  $127.4$ & $19, 47$ & $16, 16$ & $6$ & $-4, 3$ \\
\hline
$4$ & $5$ &  $348.5$ & $46, 119$ & $32, 32$ & $8$ & $-4, 3$ \\
\hline
$5$ & $3$ &  $290.9$ & $8, 11$ & $8, 8$ & $2$ & $-7, 4$ \\
\hline
\end{tabular}
\vspace{-0.12in}
\end{center}
\caption{\label{tab:drei}
Random affine arrangements defined by $k$ cubics in $\RR^n$.}
\end{table}

Table \ref{tab:drei} presents analogous results for cubics
in $\RR^n$ where $n=3,4,5$.
The second-last row concerns quintuples of cubics in $\RR^4$.
It takes less than six minutes to identify all regions,
of which there are up to $119$. We observed a narrow range
$\{-4,\ldots,3\}$ of Euler characteristics.

Generic instances have no
tangencies to the hyperplane at infinity. To experiment
with that issue, we can try $k$ paraboloids in $\RR^n$.
Each paraboloid contributes one undecided region touching the hyperplane at infinity.
Here is an instance of $k\!=\!4$ paraboloids for  $n\!=\!3$:

\begin{lstlisting}[language=code]
@var x y z
f = [3 + x + 3*y - z + (1 + 2*x + 4*y - 4*z)^2 + (2 + 3*x + 2*y + 3*z)^2,
     3 + x + 3*z + (3 - 3*x - 2*z)^2 + (3 + 3*x + 3*y + 4*z)^2,
     2 - 2*x - 2*y - 3*z + (2 - x + 4*z)^2 +  (2 + 3*x + y + 2*z)^2,
     1 - 3*x + 3*y - 3*z + (1 - 2*y + 2*z)^2 + (2 + x + 4*y)^2 ]
regions(f, bounded_check = true)
\end{lstlisting}
The output in Figure \ref{fig:paraout} consists of six regions, each with a unique sign pattern. Notice that the label ``undecided'' does \emph{not} mean we claim this region touches the hyperplane at infinity. It means that our algorithm could not decide whether this region is bounded or unbounded. 

\begin{figure}[h] 
\begin{center}
    \includegraphics[height=5.6cm]{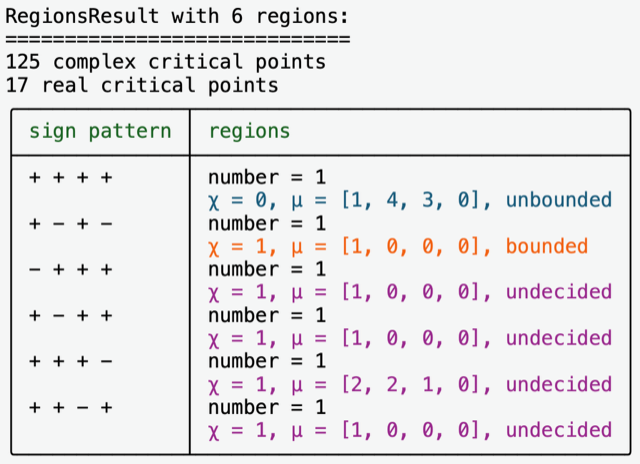}
    \caption{Output of {\tt HypersurfaceRegions.jl} for four paraboloids in $\RR^3$.}
    \label{fig:paraout}
\end{center}
\end{figure}

We now turn to random spectrahedra.
We sampled symmetric $m \times m$ matrices $A_0,\ldots,A_n$
where the entries are independent standard Gaussians.
The   principal minors of the matrix $A(x)$ define
an arrangement of $k = 2^m-1$ hypersurfaces in the affine space $\RR^n$.
We ran {\tt HypersurfaceRegions.jl} on this input. 
Our results are summarized in Table \ref{tab:eins}.
For each row we fix the parameters $n$ and $m$.
The columns have the same meaning as before.
For instance, the fifth column  reports the range 
for the number of realizable sign vectors $\sigma$.

\begin{table}[h]
\begin{center}
\begin{tabular}{|c|c|c|c|c|c|c|}
\hline
$n$ & $m$ &  time & min-max \#reg & min-max $\# \sigma$ & max $\#$reg$/\sigma$ & min-max $\chi$ \\ 
\hline
$2$ & $2$ &  $3.3$ & $4, 8$ & $4, 6$ & $3$ & $\,\,\,1, 1$ \\
\hline
$2$ & $3$ & $4.4$ & $38, 58$ & $24, 35$ & $9 $& $-1, 1$ \\
\hline
$3$ & $3$ &  $13.4$ & $80, 122$ & $34, 38$ & $21$ & $-2, 1$ \\
\hline
$4$ & $3$ & $46.3$ & $117, 150$ & $38, 38$ & $26$ & $-2, 1$ \\
\hline
\end{tabular}
\vspace{-0.12in}
\end{center}
\caption{\label{tab:eins}
Random arrangements in $\RR^n$ defined by $n+1$ symmetric $m \times m$ matrices.}
\end{table}

We also performed computations for $m \geq 4$,
but we do not report them because we ran into
 numerical issues. Frequently, the 
Hession of  (\ref{eq:loglikelihood})
  at some critical point is almost singular.

\bigskip
\bigskip

\footnotesize{
{\bf Authors' addresses:}

\bigskip

Paul Breiding, Universit\"at Osnabr\"uck 
 \hfill {\tt \href{mailto:pbreiding@uni-osnabrueck.de}{pbreiding@uni-osnabrueck.de}}

Bernd Sturmfels, MPI MiS Leipzig  \hfill {\tt \href{mailto:bernd@mis.mpg.de}{bernd@mis.mpg.de}}

Kexin (Ada) Wang, Harvard University \hfill {\tt \href{mailto:kexin_wang@g.harvard.edu}{kexin$\_$wang@g.harvard.edu}}

\end{document}